\patchcmd{\listof}{1.5em}{0pt}{}{}
\newcommand{\SA}{\mathcal{SA}}
\newcommand{\ISA}{\mathcal{ISA}}
\newcommand{\LCP}{\mathcal{LCP}}
\newcommand{\OO}{\mathcal{O}}
\newcommand{\SR}{\mathcal{SR}}
\tikzset{->,>=stealth',auto,node distance=2cm,thick,initial text=} 
\tikzset{
    position/.style args={#1:#2 from #3}{
        at=(#3.#1), anchor=#1+180, shift=(#1:#2)
    }
}
\pgfplotsset{width=10cm,compat=1.9}
\renewcommand{\tt}{\ttfamily}
\newcommand{\xRightarrow}[2][]{\ext@arrow 0359\Rightarrowfill@{#1}{#2}}
\xpretocmd{\@chapter}{\addtocontents{loa}{\protect\addvspace{10pt}}}{}{}{}%
\begin{document}
\title{String Matching with a Dynamic Pattern}
%
% \titlerunning{Abbreviated paper title}
% If the paper title is too long for the running head, you can set
% an abbreviated paper title here
%
\author{Bruno Monteiro\orcidID{0009-0009-9435-5323} \and
Vinicius dos Santos\orcidID{0000-0002-4608-4559}}
\authorrunning{B. Monteiro \and V. dos Santos}
% First names are abbreviated in the running head.
% If there are more than two authors, 'et al.' is used.
%
\institute{Computer Science Department \\
Federal University of Minas Gerais, Brazil \\
\email{malettabruno@gmail.com, viniciussantos@dcc.ufmg.br}}
\maketitle              % typeset the header of the contribution
\begin{abstract}
In this work, we tackle a natural variation of the String Matching Problem on the case of a dynamic pattern, that is, given a static text $T$ and a pattern $P$, we want to support character additions and deletions to the pattern, and after each operation compute how many times it occurs in the text. We show a simple and practical algorithm using Suffix Arrays that achieves $\mathcal O(\log |T|)$ update time, after $\mathcal O(|T|)$ preprocess time. We show how to extend our solution to support substring deletion, transposition (moving a substring to another position of the pattern), and copy (copying a substring and pasting it in a specific position), in the same time complexities. Our solution can also be extended to support an online text (adding characters to one end of the text), maintaining the same amortized bounds.

\keywords{Strings \and Algorithms \and Suffix array \and String matching}
\end{abstract}

\section{Introduction}

The String Matching Problem consists of, given two strings usually called the \textit{text} and the \textit{pattern}, computing the indices where the pattern occurs in the text. After Knuth, Morris, and Pratt~\cite{kmp,mp} settled it with a linear-time solution in 1970, work was done on variations of the problem. 
Fischer and Paterson~\cite{fischer} introduced the problem of \textit{Pattern Matching with Wildcards}, in which we can have ``don't-care'' symbols in the text or the pattern, that match with any other symbol. Another studied variation is the \textit{Approximate Pattern Matching}, which consists of finding occurrences of the pattern subject to at most $k$ mismatches~\cite{ivanov,landau}.
Due to its wide applicability, pattern matching and related problems remain an active and well-studied research topic~\cite{das2022internal,kociumaka2024internal}.

Another problem of interest is the Indexing Problem. This problem can be solved \textit{offline} (if all patterns are known in advance) in linear time \cite{ahocorasick}, for a constant size alphabet. In the \textit{online} scenario, we want to preprocess only the text and answer queries efficiently. Classical solutions for this problem use suffix trees or suffix arrays.
Both data structures can be computed directly in linear time and space \cite{farach,karkkainen}, assuming the alphabet symbols are integers bounded by $|T|^c$, for text $T$ and $c \in \OO(1)$. Of course, if the alphabet is unbounded, in the comparison model there is a lower bound of $\Omega(|T| \log |T|)$ (from sorting) for building these data structures \cite{cormen}. When it comes to query time, the data structures achieve different bounds (see \autoref{tab:indexing}).

\begin{table}[ht]
    \centering
    \caption{Deterministic worst-case query time for the Indexing Problem (for pattern $P$ and text $T$) across different data structures that can be built in linear time and space.}
    \begin{tabular}{|c|c|c|}
        \hline
        \cellcolor{gray!20} Data Structure & \cellcolor{gray!20} Query Time & \cellcolor{gray!20} Source \\
        \hline\hline
        Suffix Tree & \makecell{$\OO(|P| \log |\Sigma|)$ \\ or \\ $\OO(|P| + \log |T|)$} & \makecell{\cite{weiner} \\ \\ \cite{cole2003}} \\
        \hline
        Suffix Array & $\OO(|P| + \log |T|)$ & \cite{myers} \\
        \hline
        Suffix Tray & $\OO(|P|+ \log |\Sigma|)$ & \cite{cole2006} \\
        \hline
    \end{tabular}
    \label{tab:indexing}
\end{table}

It is also possible to maintain the $\OO(|P| + \log |T|)$ query time while supporting updates of adding a symbol to the end of the text (also called \textit{online text}) in $\OO(\log |T|)$ worst-case time \cite{amir2005}. Studies have been made in the case of a \textit{dynamic} text, in which we can add or remove symbols in arbitrary positions of the text while supporting efficient queries \cite{alstrup2000pattern,ferragina1998,gu}.

The case of a dynamic pattern has been given less attention. Amir and Kondratovsky~\cite{amir2018} were the first to give a sub-linear solution for updating the pattern and querying for the number of occurrences of the pattern in the text, the \textit{modified pattern reporting problem}. They show how to maintain a pattern subject to symbol change (insertion and deletion) in $\OO(\log |T|)$ time, and substring copy-pasting and deletion in $\OO(\log |T| + \ell)$ time, with $\ell$ being the size of the modified substring. This is done after $\OO(|T| \sqrt{\log |T|})$ preprocessing time and memory.

Using several insights regarding the suffix array, we achieve better time bounds for the aforementioned problem (see \autoref{tab:complexities_static_intro}), with a simpler algorithm. We show how we can maintain a dynamic pattern, subject to character addition or deletion and substring deletion, transposition (moving a substring to another position of the pattern), and copy (copying a substring and pasting it in a specific position). After every update to the pattern, we can output the number of times the pattern occurs in the text.

\begin{table}[ht]
    \centering
    \caption{Overview of the time bounds we achieve for the \textit{modified pattern reporting problem}. Here, edition is an addition or deletion, and $\ell$ is the size of the modified substring.}
    \begin{tabular}{|c|c|c|}
        \hline
        \cellcolor{gray!20} Operation & \cellcolor{gray!20} \cite{amir2018} & \cellcolor{gray!20} This work \\
        \hline\hline
        Preprocess Time and Space & $\OO(|T| \sqrt{\log |T|})$ & $\OO(|T|)$ \\
        \hline
        Pattern Symbol Edition & $\OO(\log |T|)$ & $\OO(\log |T|)$ \\
        \hline
        Pattern Substring Edition & $\OO(\log |T| + \ell)$ & $\OO(\log |T|)$ \\
        \hline
    \end{tabular}
    \label{tab:complexities_static_intro}
\end{table}

In addition, we apply our algorithm for an online text, that is, supporting addition of a symbol to one end of the text. We achieve amortized logarithmic bounds (see \autoref{tab:complexities_online_intro}). 
While Amir and Kondratovsky~\cite{amir2018} outline a suffix tree based approach claiming logarithmic worst-case bounds, this work provides a simpler suffix array based solution with clear logarithmic amortized bounds for all operations.
%The paper by \citef{amir2018} claims worst-case logarithmic bounds, but we could not understand if this is really the case, and we could not get in touch with the authors, despite our best efforts.

\begin{table}[ht]
    \centering
    \caption{Overview of the time bounds we achieve for the \textit{modified pattern reporting problem} with an online text. Here, edition is an addition or deletion, and $\ell$ is the size of the modified substring.}
    \begin{tabular}{|c|c|c|}
        \hline
        \cellcolor{gray!20} Operation & \cellcolor{gray!20} \cite{amir2018} & \cellcolor{gray!20} This work \\
        \hline\hline
        Preprocess Time and Space & $\OO(|T| \sqrt{\log |T|})$ & $\OO(|T|)$ \\
        \hline
        Pattern Symbol Edition & $\OO(\log |T|)$ & $\OO(\log |T|)^\star$ \\
        \hline
        Pattern Substring Edition & $\OO(\log |T| + \ell)$ & $\OO(\log |T|)^\star$ \\
        \hline
        Text Symbol Extension & $\OO(\log |T|)$ & $\OO(\log |T|)^\star$ \\
        \hline
    \end{tabular}
    \label{tab:complexities_online_intro}
\end{table}

For convenience of presentation, we define \autoref{prob:dyn_pattern_static_text}, which can be shown to be equivalent to the \textit{modified pattern reporting problem}.

\begin{problem}[Dynamic Pattern and Static Text Matching]
    \label{prob:dyn_pattern_static_text}
    
    \parindent0pt
    \textbf{Input}: A string $T$ that represents the text, several updates to the pattern string $P$ that can be one of the following.

    \begin{enumerate}
        \item Pattern search: set the current pattern to a new pattern provided as input;
        \item Pattern symbol edition: addition or deletion of some character of the current pattern;
        \item Pattern substring edition: substring deletion, transposition (moving the substring to another position), or copy on the current pattern.
    \end{enumerate}

    \textbf{Output}: After every operation, the number of times the current pattern occurs in the text.
\end{problem}

The solution from the literature~\cite{amir2018} uses suffix trees. Our solution with suffix arrays is considerably simpler, reduces the preprocessing time and space requirement, and improves upon the time required for the substring operations.

Note that in this work we do not assume that the text $T$ is necessarily much larger than the pattern $P$ ($|T| \gg |P|$). Even though we assume $|T| > |P|$, factors of, for example, $\log |T|$ and $|P|$ are both taken into account in the time complexity analysis. That is, we are interested in the cases when the pattern can be large relative to the text.
% If it is always small, we can solve \autoref{prob:dyn_pattern_static_text} by using text indexing.
\section{Suffix Arrays and Suffix Ranges}

The suffix array was first developed by Manber and Myers~\cite{myers} as a more space-efficient alternative to Suffix Trees, a data structure developed for solving the Indexing Problem \cite{myers,weiner}.

\begin{definition}[Suffix Array of a String]
    The \textit{suffix array} of a string $S$, denoted as $\SA(S)$, is an array of length $|S|$ such that $\SA(S)[i]$ stores the starting index of the $i$-th smallest suffix of $S$, in lexicographical order.
    \label{def:sa}
\end{definition}

Note that there are no ties, so the suffix array is unique. It is very common to use, along with the suffix array, the \textit{lcp} array.

\begin{definition}[Longest Common Prefix Array of a String]
    The \textit{longest common prefix array} (or simply $lcp$ array) of a string $S$, denoted as $\LCP(S)$, is an array of size $|S|-1$ such that $\LCP(S)[i] = lcp(S_{\SA[i]}, S_{\SA[i+1]})$, that is, $\LCP(S)[i]$ is the $lcp$ between the $i$-th and $(i+1)$-th smallest suffixes of $S$.
    \label{def:lcp}
\end{definition}

% See \autoref{fig:sa_example2} for an example.
Both the suffix array and the $lcp$ array can be computed in linear time \cite{karkkainen,kasai,kim}, assuming the alphabet symbols are integers bounded by $|T|^c$, for text $T$ and $c \in \OO(1)$.

% \begin{figure}[ht]
%     \centering
%     \begin{tabular}{cccl}
%         $i$ & \tiny $\SA[i]$ & \tiny $\LCP[i]$ & \tiny $S_{\SA[i]}$ \\
%         \hline
%         0  & 10 & 1 & \tt i \\
%         1  & 7  & 1 & \tt ippi \\
%         2  & 4  & 4 & \tt issippi \\
%         3  & 1  & 0 & \tt ississippi \\
%         4  & 0  & 0 & \tt mississippi \\
%         5  & 9  & 1 & \tt pi \\
%         6  & 8  & 0 & \tt ppi \\
%         7  & 6  & 2 & \tt sippi \\
%         8  & 3  & 1 & \tt sissippi \\
%         9  & 5  & 3 & \tt ssippi \\
%         10 & 2  &   & \tt ssissippi \\
%     \end{tabular}
%     \caption{$\SA$ and $\LCP$ of $S = \text{\tt mississippi}$.}
%     \label{fig:sa_example2}
% \end{figure}

\begin{lemma}
Let $S$ be a string and $\SA$ its suffix array. For $0 \leq i < j < |S|$, if there is a common prefix between $S_{\SA[i]}$ and $S_{\SA[j]}$ of length $\ell$, then $lcp(S_{\SA[i]}, S_{\SA[k]}) \geq \ell$, for all $i < k < j$.
\label{lem:lcprange}
\end{lemma}

\begin{proof}
From definition of suffix array, we know that $S_{\SA[i]} \prec S_{\SA[k]} \prec S_{\SA[j]}$. By contradiction, assume that the statement is false. This means that, for some $i < k < j$, there is an integer $x < \ell$ such that $lcp(S_{\SA[i]}, S_{\SA[k]}) = x$ and $S_{\SA[i]}[x] < S_{\SA[k]}[x]$. But since $lcp(S_{\SA[i]}, S_{\SA[j]}) \geq \ell$, we have that $S_{\SA[i]}[x] = S_{\SA[j]}[x]$, implying $S_{\SA[j]} \prec S_{\SA[k]}$, a contradiction.
\end{proof}

\begin{lemma}
    The $lcp$ between any two suffixes of $S$, $S_i$ and $S_j$, is equal to the minimum over the range in the $lcp$ array (assuming $\ISA[i] < \ISA[j]$):
    
    $$ lcp(S_i, S_j) = \min_{\ISA[i] \leq k < \ISA[j]} \LCP[k], $$

    This also holds for any set of strings sorted lexicographically.
\label{lem:lcp_query}
\end{lemma}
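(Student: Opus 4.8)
The plan is to prove the two inequalities $lcp(S_i, S_j) \geq \min_k \LCP[k]$ and $lcp(S_i, S_j) \leq \min_k \LCP[k]$ separately, where the minimum ranges over $\ISA[i] \leq k < \ISA[j]$. Write $a = \ISA[i]$ and $b = \ISA[j]$, so that $S_i = S_{\SA[a]}$ and $S_j = S_{\SA[b]}$, and let $m = \min_{a \leq k < b}\LCP[k]$.

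For the lower bound, I would argue that a common prefix propagates along consecutive suffixes in the array. Each $\LCP[k] = lcp(S_{\SA[k]}, S_{\SA[k+1]})$ is at least $m$ by definition of the minimum, so $S_{\SA[a]}, S_{\SA[a+1]}, \dots, S_{\SA[b]}$ all share a common prefix of length $m$: the first $m$ characters of $S_{\SA[a]}$ agree with those of $S_{\SA[a+1]}$, which agree with those of $S_{\SA[a+2]}$, and so on by transitivity of character equality, giving $lcp(S_{\SA[a]}, S_{\SA[b]}) \geq m$.

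For the upper bound, let $k^\star$ be an index in $[a, b)$ achieving $\LCP[k^\star] = m$, and let $\ell = lcp(S_{\SA[a]}, S_{\SA[b]})$. Since $a \leq k^\star < k^\star + 1 \leq b$, both $S_{\SA[k^\star]}$ and $S_{\SA[k^\star+1]}$ lie in the range $[a, b]$. By the lower-bound reasoning just given (or directly by \autoref{lem:lcprange} applied to the common prefix of length $\ell$ shared by $S_{\SA[a]}$ and $S_{\SA[b]}$), every suffix strictly between indices $a$ and $b$ shares a prefix of length $\ell$ with $S_{\SA[a]}$, hence so do $S_{\SA[k^\star]}$ and $S_{\SA[k^\star+1]}$; therefore $lcp(S_{\SA[k^\star]}, S_{\SA[k^\star+1]}) \geq \ell$, i.e. $m \geq \ell$. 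Combining both bounds yields $lcp(S_i, S_j) = m$. Finally, the remark about an arbitrary lexicographically sorted set of strings follows because nothing in the argument used that the strings are suffixes of a common string — only that they are totally ordered by $\prec$ and that the $\LCP$ values are the pairwise $lcp$'s of adjacent elements; I would state this in one sentence.

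I expect the only subtlety to be making the transitivity-of-common-prefixes step rigorous without hand-waving: one must note that "$S_{\SA[k]}$ and $S_{\SA[k+1]}$ agree on their first $m$ characters for every $k$ in the range" really does force all of $S_{\SA[a]}, \dots, S_{\SA[b]}$ to agree on their first $m$ characters, and in particular that each such suffix has length at least $m$ (which holds because a common prefix of length $m$ between two strings requires both to have length $\geq m$). This is routine but worth spelling out; everything else is immediate from \autoref{lem:lcprange} and \autoref{def:lcp}.
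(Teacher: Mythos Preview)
Your proposal is correct and follows essentially the same approach as the paper: both arguments establish the two inequalities using \autoref{lem:lcprange} for one direction and transitivity of common prefixes along consecutive suffixes for the other. The only cosmetic difference is that the paper fixes $\ell = lcp(S_i, S_j)$ and argues about the minimum, whereas you fix $m = \min_k \LCP[k]$ and argue about $\ell$; the underlying reasoning is identical.
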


\begin{proof}
    Let $\ell = lcp(S_i, S_j)$. From \autoref{lem:lcprange}, we know that the first $\ell$ characters of $S_i$ are equal to those of $S_{\SA[\ISA[i]+1]}$ (the next suffix in the suffix array order), and are also equal to those of $S_{\SA[\ISA[i]+2]}$, and so on until $S_{j}$. Therefore, the $\LCP$ values of these positions are at least $\ell$. And they can not be all greater than $\ell$, otherwise we would have $lcp(S_i, S_j) > \ell$. So the minimum of the $\LCP$ values of the range is exactly $\ell$. All these arguments are also valid for any set of strings sorted lexicographically.
\end{proof}

From \autoref{lem:lcp_query} we can solve $lcp$ queries of arbitrary suffixes in constant time, since range minimum queries can be solved in constant time with linear time construction \cite{bender}.

\subsection{Suffix range}

The reason why suffix arrays are useful for the Indexing Problem is because the occurrence of any pattern defines a \textit{range} in the suffix array, which we call \textit{suffix range} (see \autoref{def:suffixrange}).

\begin{definition}[Suffix Range]
    Let $P, T$ be strings, and let $S$ be the set of indices where $P$ occurs in $T$. The \textit{suffix range} of $P$ with respect to $T$, denoted by $\SR(P, T)$ is the set $\{i : \SA(T)[i] \in S\}$. From \autoref{lem:suffixrange}, this set is a range of indices, so we can represent it by a range $[\ell, r)$, meaning that the suffix range is $\ell, \ell+1, \dots, r-1$. If $P$ does not occur in $T$, we represent the suffix range by the empty range $[0, 0)$.
    \label{def:suffixrange}
\end{definition}

\begin{lemma}
    If $P$ occurs in $T$, then $\SR(P, T)$ is a range of indices.
\label{lem:suffixrange}
\end{lemma}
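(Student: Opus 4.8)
The plan is to show that the set of starting positions of suffixes of $T$ that have $P$ as a prefix forms a contiguous block in the suffix array, which is exactly the claim that $\SR(P,T)$ is a range. The key observation is that the suffixes of $T$ appear in $\SA(T)$ in lexicographic order, and the suffixes sharing the common prefix $P$ are precisely those that are "close" to each other in this order.

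First I would set up notation: let $S = \{i : P \text{ occurs at position } i \text{ in } T\}$ be the set of occurrence positions, which is nonempty by hypothesis. Let $a = \min\{\ISA[i] : i \in S\}$ and $b = \max\{\ISA[i] : i \in S\}$ be the smallest and largest suffix-array positions among occurrences; I want to prove $\{\ISA[i] : i \in S\} = \{a, a+1, \dots, b\}$. The inclusion $\subseteq$ is immediate from the definitions of $a$ and $b$. For the reverse inclusion, take any index $k$ with $a \le k \le b$ and let $m = \SA(T)[k]$; I must show $P$ is a prefix of $T_m$. Let $i, j \in S$ be occurrences realizing $\ISA[i] = a$ and $\ISA[j] = b$. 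Since $P$ is a prefix of both $T_i$ and $T_j$, we have $lcp(T_i, T_j) \ge |P|$, and then \autoref{lem:lcprange} (applied with the roles of the indices being $a \le k \le b$ in suffix-array order) gives $lcp(T_i, T_m) \ge |P|$ — more carefully, \autoref{lem:lcprange} as stated gives a common prefix of length $|P|$ between $T_{\SA[a]}$ and $T_{\SA[k]}$, hence between $T_i$ and $T_m$. Since $|T_i| \ge |P|$ and the first $|P|$ characters of $T_i$ are exactly $P$, the first $|P|$ characters of $T_m$ are also $P$, so $P$ is a prefix of $T_m$, i.e. $m \in S$ and $k \in \{\ISA[i] : i \in S\}$.

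Thus $\{\ISA[i] : i \in S\} = [a, b+1)$ is a range of indices, and by \autoref{def:suffixrange} this set is exactly $\SR(P,T)$, completing the proof. One small technical point to handle cleanly: \autoref{lem:lcprange} is stated for strictly interior indices $i < k < j$, so the boundary cases $k = a$ and $k = b$ should be noted separately — but these are trivial since then $m = i$ or $m = j$ and membership in $S$ is immediate. Another point worth a sentence is that $lcp(T_i, T_j) \ge |P|$ genuinely requires both suffixes to have length at least $|P|$, which holds because $P$ occurs in $T$ starting at positions $i$ and $j$.

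I do not anticipate a serious obstacle here; the argument is essentially a direct application of \autoref{lem:lcprange}. The only thing requiring care is the bookkeeping between positions in the text (the values $\SA[k]$) and positions in the suffix array (the indices $k$ themselves), together with the off-by-one at the endpoints of the range. If one wanted an even shorter argument, one could alternatively invoke \autoref{lem:lcp_query} to say $lcp(T_i, T_m) = \min_{a \le t < k} \LCP[t] \ge \min_{a \le t < b} \LCP[t] = lcp(T_i, T_j) \ge |P|$, which packages the monotonicity more succinctly.
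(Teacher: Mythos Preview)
Your proof is correct and follows essentially the same approach as the paper: both arguments reduce to applying \autoref{lem:lcprange} to conclude that any suffix sitting between two occurrences in suffix-array order must also have $P$ as a prefix. The only cosmetic difference is that the paper phrases it as a short proof by contradiction (assume some intermediate $\SA[k] \notin S$ and derive a contradiction), whereas you argue directly by taking $a = \min$ and $b = \max$ and showing every $k \in [a,b]$ lies in the set; the underlying logic is identical.
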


\begin{proof}
    Let $\SA$ be the suffix array of $T$ and let $S$ be the set of indices where $P$ occurs in $T$. By contradiction, assume that there are indices $i < k < j$ such that $\SA[i], \SA[j] \in S$ and $\SA[k] \notin S$. We know that $lcp(T_{\SA[i]}, T_{\SA[j]}) \geq |P|$, so from \autoref{lem:lcprange} we have that $lcp(T_{\SA[i]}, T_{\SA[k]}) \geq |P|$, so $\SA[k]$ is an occurrence of $P$, a contradiction.
\end{proof}

It turns out that $\SR(P, T)$ can be computed in $\OO(|P| + \log |T|)$ using binary search and some clever $lcp$ insights \cite{myers}.
\section{Suffix range concatenation}

Our main insight is the fact that if we have $|A|$, $|B|$, $\SR(A, T)$, and $\SR(B, T)$, we can compute $\SR(A \circ B, T)$ efficiently, as we will see in \autoref{th:concat} and \autoref{alg:concat}. Here, $A \circ B$ denotes the concatenation of strings $A$ and $B$.

\begin{lemma}
    If $A \circ B$ occurs in $T$, then $\SR(A \circ B, T)$ is a sub-range of $\SR(A, T)$.
\label{lem:concat}
\end{lemma}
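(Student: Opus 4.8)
The plan is to show that every occurrence of $A \circ B$ in $T$ is in particular an occurrence of $A$ in $T$, and then translate this containment of index sets into containment of the corresponding suffix-array ranges. First I would let $S_{AB}$ be the set of starting indices where $A \circ B$ occurs in $T$, and $S_A$ the set of starting indices where $A$ occurs. The key elementary observation is that if $A \circ B$ occurs starting at position $p$, then the substring $T[p \mathinner{.\,.} p + |A|)$ equals $A$, so $p \in S_A$; hence $S_{AB} \subseteq S_A$. This is the conceptual heart of the argument and it is essentially immediate from the definition of occurrence.

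Next I would pass to the suffix array of $T$. By \autoref{def:suffixrange}, $\SR(A \circ B, T) = \{ i : \SA(T)[i] \in S_{AB} \}$ and $\SR(A, T) = \{ i : \SA(T)[i] \in S_A \}$. Since $S_{AB} \subseteq S_A$, any index $i$ with $\SA(T)[i] \in S_{AB}$ also satisfies $\SA(T)[i] \in S_A$, so $\SR(A \circ B, T) \subseteq \SR(A, T)$ as sets of indices. Finally, because $A \circ B$ is assumed to occur in $T$, \autoref{lem:suffixrange} guarantees that $\SR(A \circ B, T)$ is a range, and likewise $\SR(A, T)$ is a range (as $A$ also occurs, being a prefix of the occurring string $A \circ B$); a set of indices that is a range and is contained in another range is a contiguous sub-range of it, which is exactly the claim.

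I do not anticipate a genuine obstacle here: the statement is a structural warm-up for the real work in \autoref{th:concat}, and both inputs to the argument (the trivial prefix containment $S_{AB} \subseteq S_A$ and the range property from \autoref{lem:suffixrange}) are already in hand. The only point requiring a word of care is noting that $A$ itself occurs in $T$ whenever $A \circ B$ does, so that \autoref{lem:suffixrange} legitimately applies to $\SR(A, T)$ and the phrase ``sub-range'' is meaningful rather than vacuous.
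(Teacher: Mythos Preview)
Your proposal is correct and follows essentially the same approach as the paper: both arguments boil down to the observation that any suffix having $A \circ B$ as a prefix also has $A$ as a prefix, so the occurrence set (equivalently, the suffix-array index set) of $A \circ B$ is contained in that of $A$. Your write-up is simply a more detailed unpacking of the paper's one-line proof, with the additional care of invoking \autoref{lem:suffixrange} to justify that the containment of index sets is indeed a containment of contiguous ranges.
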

\begin{proof}
    All of the suffixes from $\SR(A \circ B, T)$ have $A$ as a prefix, therefore they are also in $\SR(A, T)$.
\end{proof}

\begin{theorem}
    Given $|A|$, $|B|$, $\SR(A, T)$, and $\SR(B, T)$, we can compute $\SR(A \circ B, T)$ in $\OO(\log |T|)$ time.
    \label{th:concat}
\end{theorem}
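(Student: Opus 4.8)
The plan is to reduce the computation to two binary searches over $\SR(A, T)$. By \autoref{lem:concat}, $\SR(A \circ B, T)$ is a sub-range of $\SR(A, T)$; write $\SR(A, T) = [\ell, r)$. For an index $k$ with $\ell \le k < r$, the suffix $T_{\SA[k]}$ has $A$ as a prefix, so it has $A \circ B$ as a prefix if and only if the suffix obtained by deleting those first $|A|$ characters, namely $T_{\SA[k] + |A|}$, has $B$ as a prefix; equivalently, $\SA[k] + |A|$ is an occurrence of $B$ in $T$, which by \autoref{def:suffixrange} means precisely $\ISA[\SA[k] + |A|] \in \SR(B, T)$. (We may assume $A$ and $B$ both occur in $T$ and are non-empty — otherwise the answer is one of $\SR(A,T)$, $\SR(B,T)$, or $[0,0)$ — and that $T$ carries a sentinel terminator, so that $\SA[k] + |A|$ is always a valid suffix index and no boundary case arises.) Thus $\SR(A \circ B, T) = \{\, k \in [\ell, r) : \ISA[\SA[k] + |A|] \in \SR(B, T)\,\}$, and it remains to locate this subset efficiently.

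The key step is monotonicity: restricted to $k \in [\ell, r)$, the map $k \mapsto \ISA[\SA[k] + |A|]$ is strictly increasing. Indeed, for $\ell \le i < j < r$, both $T_{\SA[i]}$ and $T_{\SA[j]}$ begin with $A$ and satisfy $T_{\SA[i]} \prec T_{\SA[j]}$; writing $T_{\SA[i]} = A \circ T_{\SA[i] + |A|}$ and $T_{\SA[j]} = A \circ T_{\SA[j] + |A|}$ and cancelling the shared prefix $A$ preserves the strict lexicographic order, so $T_{\SA[i] + |A|} \prec T_{\SA[j] + |A|}$, i.e.\ $\ISA[\SA[i] + |A|] < \ISA[\SA[j] + |A|]$ (these suffixes are distinct since $\SA[i] \ne \SA[j]$).

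Since the test value is strictly increasing in $k$ and $\SR(B, T) = [\ell_B, r_B)$ is a contiguous range, the set $\{\, k \in [\ell, r) : \ell_B \le \ISA[\SA[k] + |A|] < r_B\,\}$ is a contiguous sub-interval of $[\ell, r)$ — in agreement with \autoref{lem:suffixrange}. We find it with two binary searches over $k \in [\ell, r)$: one for the least $k$ with $\ISA[\SA[k] + |A|] \ge \ell_B$ (the left endpoint) and one for the least $k$ with $\ISA[\SA[k] + |A|] \ge r_B$ (the right endpoint), returning the empty range $[0,0)$ if the two coincide. Each search runs for $\OO(\log |T|)$ iterations, and each iteration costs $\OO(1)$ via direct lookups in the precomputed $\SA$ and $\ISA$ arrays, so the whole procedure takes $\OO(\log |T|)$ time, as claimed; this is exactly the content of \autoref{alg:concat}.

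The one delicate point is ensuring that $\SA[k] + |A|$ always names a legitimate suffix so that $\ISA[\cdot]$ is defined throughout the binary search; the sentinel convention above disposes of this, and then the degenerate inputs (empty $A$, empty $B$, or $A \circ B$ absent from $T$) all emerge correctly from the same two searches. Everything else is immediate from the definition of the suffix array and the monotonicity observation.
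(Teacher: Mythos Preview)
Your proof is correct and follows essentially the same approach as the paper: reduce to finding a sub-range of $\SR(A,T)$, observe that $k \mapsto \ISA[\SA[k]+|A|]$ is monotone on that range, and recover the endpoints with two binary searches using $\SA$ and $\ISA$ lookups. Your treatment is in fact a bit more explicit about the monotonicity argument and the boundary case (which the paper handles in \autoref{alg:concat} via the test $\SA[m]+|A|=|T|$ rather than a sentinel), but the idea is identical.
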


\begin{proof}
    From \autoref{lem:concat}, we want to find a sub-range of $\SR(A, T)$, so it suffices to find its first and last positions. Let $[\ell, r)$ be the range $\SR(A, T)$. We want to find the first $i \in [\ell, r)$ such that $T_{\SA[i] + |A|}$ has $B$ as a prefix. But note that, since $T_{\SA[\ell]}, T_{\SA[\ell+1]}, \dots, T_{\SA[r-1]}$ all have an $lcp$ of at least $|A|$, then $T_{\SA[\ell] + |A|}, T_{\SA[\ell+1] + |A|}, \dots, T_{\SA[r-1] + |A|}$ appear in this order in the suffix array, because their lexicographical comparison is not decided by their first $|A|$ characters, so skipping them maintain their order. Therefore, we can do a binary search on the range $[\ell, r)$, and when checking some suffix, we look at the position of the suffix that skips $|A|$ characters from that suffix, and check if it is in $\SR(B, T)$, using the $\ISA$. We can therefore find the first and last position in $[\ell, r)$ that correspond to $\SR(A \circ B, T)$, with two binary searches.
\end{proof}

A visual representation of \autoref{lem:concat} is shown in \autoref{fig:concat_example}. The algorithm is implemented in \autoref{alg:concat}, using the type of binary search that maintains that the answer is in the half-open range $[\ell, r)$.

\begin{figure}[ht]
    \centering
    \begin{tabular}{ccccl}
        & $i$ & \tiny $\SA[i]$ & \tiny $\LCP[i]$ & \tiny $S_{\SA[i]}$ \\
        \cline{2-5}
                          & 0                   & 13 & 1 & \tt a \\
                          & 1                   & 10 & 1 & \tt aaca \\
        \tikzmark{black1} & \tikzmark{black2}2  & 8  & 3 & \tt {\color{red}aba}aca \\
        \tikzmark{blue1}  & \tikzmark{blue2}3   & 6  & 5 & \tt {\color{red}aba}{\color{blue}ba}aca \\
        \tikzmark{blue3}  & \tikzmark{blue4}4   & 4  & 3 & \tt {\color{red}aba}{\color{blue}ba}baaca \\
        \tikzmark{black3} & \tikzmark{black4}5  & 0  & 1 & \tt {\color{red}aba}cabababaaca \\
        \tikzmark{black5} & \tikzmark{black6}6  & 11 & 3 & \tt aca \\
                          & 7                   & 2  & 0 & \tt acabababaaca \\
        \tikzmark{blue5}  & \tikzmark{blue6}8   & 9  & 2 & \tt {\color{blue}ba}aca \\
        \tikzmark{blue7}  & \tikzmark{blue8}9   & 7  & 4 & \tt {\color{blue}ba}baaca \\
                          & 10                  & 5  & 2 & \tt {\color{blue}ba}babaaca \\
                          & 11                  & 1  & 0 & \tt {\color{blue}ba}cabababaaca \\
                          & 12                  & 12 & 2 & \tt ca \\
        \tikzmark{black7} & \tikzmark{black8}13 & 3  &   & \tt cabababaaca \\
    \end{tabular}
    \begin{tikzpicture}[overlay,remember picture]
        \draw[-]  ($(pic cs:black1)+(-8pt, 4pt)$) -- ($(pic cs:black2)+(-5pt, 4pt)$);
        \draw[-]  ($(pic cs:black1)+(-8pt, 4pt)$) -- ($(pic cs:black5)+(-8pt, 4pt)$);
        \draw[->] ($(pic cs:black5)+(-8pt, 4pt)$) -- ($(pic cs:black6)+(-5pt, 4pt)$);
        
        \draw[thick,blue,-]  ($(pic cs:blue1)+(-16pt, 4pt)$) -- ($(pic cs:blue2)+(-5pt,  4pt)$);
        \draw[thick,blue,-]  ($(pic cs:blue1)+(-16pt, 4pt)$) -- ($(pic cs:blue5)+(-16pt, 4pt)$);
        \draw[thick,blue,->] ($(pic cs:blue5)+(-16pt, 4pt)$) -- ($(pic cs:blue6)+(-5pt,  4pt)$);
        
        \draw[thick,blue,-]  ($(pic cs:blue3)+(-24pt, 4pt)$) -- ($(pic cs:blue4)+(-5pt,  4pt)$);
        \draw[thick,blue,-]  ($(pic cs:blue3)+(-24pt, 4pt)$) -- ($(pic cs:blue7)+(-24pt, 4pt)$);
        \draw[thick,blue,->] ($(pic cs:blue7)+(-24pt, 4pt)$) -- ($(pic cs:blue8)+(-5pt,  4pt)$);
        
        \draw[-]  ($(pic cs:black3)+(-32pt, 4pt)$) -- ($(pic cs:black4)+(-5pt,  4pt)$);
        \draw[-]  ($(pic cs:black3)+(-32pt, 4pt)$) -- ($(pic cs:black7)+(-32pt, 4pt)$);
        \draw[->] ($(pic cs:black7)+(-32pt, 4pt)$) -- ($(pic cs:black8)+(-5pt,  4pt)$);
    \end{tikzpicture}
    \caption{For $S = \text{\tt abacabababaaca}$, in red, $\SR(\text{\tt aba}, S)$, and, in blue, $\SR(\text{\tt ba}, S)$. The arrows illustrate the fact that the suffixes from $\SR(\text{\tt aba}, S)$ skipping $|\text{\tt aba}| = 3$ characters occur in increasing order, so we can use binary search to find the ones that end up in $\SR(\text{\tt ba}, S)$, as in \autoref{th:concat}.}
    \label{fig:concat_example}
\end{figure}

Let us see how to delete some characters from the beginning or the end of the pattern, and update the suffix range accordingly. For this, we need the following lemma.

\begin{lemma}
    Given any index $i \in \SR(P, T)$, we can find $\SR(P, T)$ in $\OO(\log |T|)$ time.
    \label{lem:extend_sr}
\end{lemma}

\begin{proof}
    We just need to find $\ell = \min_{j \leq i}\{j : lcp(T_{\SA[j]}, T_{\SA[i]}) \geq |P|\}$ and $r = \min_{i < j}\{j : lcp(T_{\SA[j]}, T_{\SA[i]}) < |P|\}$, and this gives us $\SR(P, T) = [\ell, r)$. Both these indices can be found with a binary search using $lcp$ queries, since by \autoref{lem:lcp_query} the $lcp$ value is monotonic when we increase the range: if $k < j \leq i$, $lcp(T_{\SA[k]}, T_{\SA[i]}) \leq lcp(T_{\SA[j]}, T_{\SA[i]})$, and symmetrically for the other direction.
\end{proof}

\begin{algorithm}[ht]
    \caption{Suffix Range Concatenation.}
    \hspace*{\algorithmicindent} \textbf{Input}: $|A|, |B|, \SR(A, T), \SR(B, T)$.
    \\
    \hspace*{\algorithmicindent} \textbf{Output}: $\SR(A \circ B, T)$.
    \\
    \hspace*{\algorithmicindent} \textbf{Time complexity}: $\OO(\log |T|)$.
    \bigskip
    
    \begin{algorithmic}
        \State $(lb, rb) \gets \SR(B, T)$
        \\
        
        \State $(\ell, r) \gets \SR(A, T)$ \Comment{Compute first index of the answer}
        \While{$\ell < r$}
            \State $m \gets \left\lfloor \frac{\ell+r}{2} \right\rfloor$
            \If{$\SA[m] + |A| = |T|$ or $\ISA[\SA[m] + |A|] < lb$}
                \State $\ell \gets m+1$
            \Else
                \State $r \gets m$
            \EndIf
        \EndWhile
        
        \State $first \gets \ell$
        \\
        
        \State $(\ell, r) \gets \SR(A, T)$ \Comment{Compute last index of the answer}
        \While{$\ell < r$}
            \State $m \gets \left\lfloor \frac{\ell+r}{2} \right\rfloor$
            \If{$\SA[m] + |A| = |T|$ or $\ISA[\SA[m] + |A|] < rb$}
                \State $\ell \gets m+1$
            \Else
                \State $r \gets m$
            \EndIf
        \EndWhile
        
        \State $last \gets \ell$
        \\
        
        \If{$l = r$}
            \State \Return $[0, 0)$ \Comment{Empty range: no occurrences}
        \Else
            \State \Return $[first, last)$
        \EndIf
    \end{algorithmic}
    \label{alg:concat}
\end{algorithm}

\begin{lemma}
    If $P$ occurs in $T$, given $\SR(P, T)$ we can compute the suffix range of $P$ with $k < |P|$ characters removed from the beginning or the end in $\OO(\log |T|)$ time.
    \label{lem:remove_from_pattern}
\end{lemma}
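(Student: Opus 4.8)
The plan is to reduce both sub-cases (deleting from the end and deleting from the beginning) to \autoref{lem:extend_sr}. That lemma recovers the whole suffix range of a pattern from any single index lying inside it, and its proof only uses the pattern's \emph{length} for the $lcp$ comparisons. Hence it suffices to exhibit, in $\OO(\log |T|)$ time, one index of the target suffix range together with the target pattern's length ($|P| - k$ in both cases); a single invocation of \autoref{lem:extend_sr} then finishes the job.

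First I would handle deleting $k$ characters from the end. Let $P'$ be the prefix of $P$ of length $|P|-k$. Every occurrence of $P$ in $T$ is also an occurrence of $P'$, so $\SR(P,T) \subseteq \SR(P',T)$; since $P$ occurs in $T$, the range $\SR(P,T)$ is nonempty, and picking any $i$ in it gives $i \in \SR(P',T)$. Feeding $i$ and the length $|P|-k$ to \autoref{lem:extend_sr} yields $\SR(P',T)$.

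Next, deleting $k$ characters from the beginning. Let $P''$ be the suffix of $P$ of length $|P|-k$. An occurrence of $P$ at position $p = \SA[i]$ (for any $i \in \SR(P,T)$) corresponds to an occurrence of $P''$ at position $p+k$: indeed $p + k < p + |P| \le |T|$, so the suffix $T_{p+k}$ exists, and $T[p+k \mathbin{..} p+|P|-1] = P[k \mathbin{..} |P|-1] = P''$, so $T_{p+k}$ has $P''$ as a prefix. Therefore $\ISA[p+k] \in \SR(P'',T)$, and applying \autoref{lem:extend_sr} to this index with length $|P|-k$ produces $\SR(P'',T)$.

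Each case performs $\OO(1)$ array lookups (into $\SA$ and $\ISA$) plus one call to \autoref{lem:extend_sr}, for a total of $\OO(\log |T|)$. The one point that needs care is the beginning-removal case: the suffixes indexed by $\SR(P,T)$ begin with $P$, not with $P''$, so such an index cannot be reused directly; the remedy is to shift the text position by $k$ through $\SA$ and then map back with $\ISA$, and to verify (as above) that the shifted position still indexes a genuine suffix having $P''$ as a prefix. I expect this shift-and-remap argument to be the main — though minor — obstacle; everything else is immediate from \autoref{lem:extend_sr}.
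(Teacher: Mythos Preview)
Your proposal is correct and matches the paper's own proof essentially line for line: for end-removal you use $\SR(P,T) \subseteq \SR(P',T)$ and invoke \autoref{lem:extend_sr} on any surviving index, and for beginning-removal you pass through $\ISA[\SA[i]+k]$ before invoking \autoref{lem:extend_sr}, exactly as the paper does. Your write-up is in fact a bit more careful than the paper's (you explicitly check that $p+k$ is a valid suffix position and that the shifted suffix really has $P''$ as a prefix), but the underlying argument is identical.
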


\begin{proof}
    Using \autoref{lem:extend_sr} it is easy to remove some amount of characters from the end of the pattern: from \autoref{lem:concat}, we know that the suffix range contains the suffix range we had. So we can take any index of the suffix range we had and extend it to find the new suffix range.
    
    To remove some amount of characters from the beginning, we apply a similar strategy. Let $P'$ be the pattern $P$ with the first $k$ characters removed. If $i \in \SR(P, T)$, then $\ISA(\SA[i]+k) \in \SR(P', T)$. Therefore, we can again use \autoref{lem:extend_sr} and compute the new suffix range.
\end{proof}

Now we are ready to tackle arbitrary character edition in the pattern.

\begin{theorem}
    Let $P$ be a pattern that occurs in the text $T$. Let $P'$ be $P$ with the $i$-th character edited (added or deleted). Given $\SR(P, T)$, we can compute $\SR(P', T)$ in $\OO(\log |T|)$ time.
    \label{th:edit}
\end{theorem}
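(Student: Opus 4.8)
The plan is to cut $P$ at the edited position, express $P'$ as a concatenation of a constant number of pieces whose suffix ranges we can obtain cheaply, and then assemble $\SR(P',T)$ with \autoref{th:concat}. Let $A$ be the prefix of $P$ consisting of its first $i-1$ characters. For a deletion, $P' = A \circ B$, where $B$ is the suffix of $P$ consisting of its last $|P|-i$ characters; for an insertion of a new character $y$ at position $i$, $P' = A \circ y \circ B$, where now $B$ consists of the last $|P|-i+1$ characters of $P$. In both situations $A$ and $B$ are a prefix and a suffix of $P$ (possibly empty, when the edit is at an extreme end).

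First I recover $\SR(A,T)$ and $\SR(B,T)$ from the given $\SR(P,T)$. Since $P$ occurs in $T$, so does every substring of $P$, in particular $A$ and $B$. As $A$ is $P$ with its last $|P|-|A|$ characters removed and $B$ is $P$ with its first $|P|-|B|$ characters removed --- in the non-degenerate cases fewer than $|P|$ characters in each --- \autoref{lem:remove_from_pattern} yields $\SR(A,T)$ and $\SR(B,T)$, each in $\OO(\log|T|)$ time; when $A$ or $B$ is empty, its suffix range is simply $[0,|T|)$.

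For a deletion I feed $|A|$, $|B|$, $\SR(A,T)$, $\SR(B,T)$ to \autoref{th:concat} and obtain $\SR(A\circ B, T) = \SR(P',T)$. For an insertion I additionally compute $\SR(y, T)$ by an ordinary suffix-array search for the length-one pattern $y$, costing $\OO(\log|T|)$, and then invoke \autoref{th:concat} twice: first on the pair $(A, y)$ to get $\SR(A\circ y, T)$, and then on $(A\circ y, B)$ to get $\SR((A\circ y)\circ B, T) = \SR(P',T)$. In either case the computation is a fixed-length composition of routines each running in $\OO(\log|T|)$ time, which establishes the bound.

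The only point requiring care --- the ``hard part'', such as it is --- is degenerate input: $A$ or $B$ empty, the inserted character $y$ absent from $T$, or $P'$ failing to occur in $T$ at all. In each of these the relevant suffix range is $[0,|T|)$ (for the empty string) or $[0,0)$ (for a non-occurring string), and one checks that \autoref{th:concat}, equivalently \autoref{alg:concat}, sends an empty operand to the empty range; so these cases collapse correctly to $\SR(P',T) = [0,0)$. Everything else is a routine assembly of the preceding lemmas.
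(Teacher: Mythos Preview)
Your argument is correct and matches the paper's proof essentially step for step: split $P$ into a prefix and suffix around the edit site via \autoref{lem:remove_from_pattern}, obtain $\SR(c,T)$ for the inserted character by binary search, and reassemble with one or two invocations of \autoref{th:concat}. Your explicit handling of the degenerate cases (empty $A$ or $B$, $y\notin T$, $P'$ not occurring) is a welcome addition that the paper glosses over.
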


\begin{proof}
    Assume we want to perform a character addition. From \autoref{lem:remove_from_pattern}, we can compute $\SR(P[0,i), T)$ and $\SR(P[i,|P|), T)$ in $\OO(\log |T|)$ time. That is, we split the pattern at the index we want to change. To add some character $c \in \Sigma$, we can then compute $\SR(c, T)$ in $\OO(\log |T|)$ time using binary search, and then concatenate the suffix ranges as outlined in \autoref{th:concat}. In the case of a deletion, we just need to concatenate $\SR(P[0,i), T)$ and $\SR(P[i+1,|P|), T)$.
\end{proof}

So far, we assumed the $P$ pattern always occurs in the text $T$. However, if an edit causes $P$ to no longer appear in $T$, the suffix range becomes empty ($\SR(P, T) = [0,0)$), and we lose the matching information. To handle this, we adopt the idea of an occurrence partition (or cover) from Amir el al.~\cite{amir2007dynamic}.
\section{Dealing with patterns that do not occur}

Now we look at the case when the pattern might not occur in the text. What we can do is represent the pattern as a \textit{concatenation} of strings that occur in the text, which we call an \textit{occurrence partition}, see \autoref{def:partition}.

\nomenclature[11]{$\vert P\vert_T$}{Minimum size of an occurrence partition of $P$ with respect to $T$ (see \autoref{def:partition})}
\begin{definition}[Occurrence Partition.]
    \label{def:partition}
    Let $P$ be a pattern and $T$ be a text. An \textit{occurrence partition} of $P$ with respect to $T$ is a sequence of strings $(P_1, P_2, \allowbreak \dots, \allowbreak P_k)$ that partition $P$, that is, $P = P_1 \circ P_2 \circ \dots \circ P_k$, and satisfies two properties:
    
    \begin{enumerate}
        \item Occurrence: $P_i$ either occurs as a substring of $T$ or $|P_i| = 1$ and $P_i$ represents a character that does not occur in $T$.
        \item Maximality: $P_i \circ P_{i+1}$ does not occur in $T$.
    \end{enumerate}

    We also denote the minimum size of such partition as $|P|_T$.
\end{definition}

Note that not all occurrences partitions are minimum, for example, for $T =$ {\tt aabcaba} and $P = $ {\tt abaaba}, ({\tt ab}, {\tt aab}, {\tt a}) is an occurrence partition, but there is another of size two: ({\tt aba}, {\tt aba}).

\begin{lemma}
    Assuming all characters of $P$ occur in $T$, a pattern $P$ occurs in a text $T$ if, and only if, the occurrence partition of $P$ with respect to $T$ has size one.
    \label{lem:partition_oc}
\end{lemma}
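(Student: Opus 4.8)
The plan is to prove both directions of the biconditional in \autoref{lem:partition_oc} directly from \autoref{def:partition}.

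For the forward direction, suppose $P$ occurs in $T$. Then the single-element sequence $(P)$ satisfies the Occurrence property trivially (since $P$ occurs as a substring of $T$). It also satisfies Maximality vacuously, since there is no index $i$ with $P_i \circ P_{i+1}$ to check (the condition is an empty conjunction over consecutive pairs). Hence $(P)$ is an occurrence partition of size one, and since no partition can have size zero (a partition must cover all of $P$, and $|P| \geq 1$), the minimum size $|P|_T$ is exactly one.

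For the reverse direction, suppose the occurrence partition of $P$ with respect to $T$ has size one, say $(P_1)$ with $P_1 = P$. By the Occurrence property, either $P_1$ occurs as a substring of $T$, in which case $P$ occurs in $T$ and we are done, or $|P_1| = 1$ and $P_1$ is a single character not occurring in $T$. I would rule out the latter case using the hypothesis that all characters of $P$ occur in $T$: if $|P_1| = 1$, then $P_1$ is a character of $P$, hence occurs in $T$, contradicting the assumption that it does not. So the first alternative must hold, and $P$ occurs in $T$.

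I expect the main subtlety — rather than obstacle — to be being careful about the edge cases in the definition, namely that Maximality is vacuous for a length-one sequence and that the "or" in the Occurrence property is genuinely exclusive here by virtue of the lemma's standing assumption on $P$'s characters. These are the two points where a sloppy reading of \autoref{def:partition} could derail the argument; everything else is immediate. A brief remark that $P$ is assumed nonempty (so size-zero partitions are impossible) rounds out the forward direction.
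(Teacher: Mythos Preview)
Your proof is correct and follows exactly the approach the paper indicates: the paper's own proof is simply ``Follows directly from the definition,'' and you have spelled out precisely that derivation, handling the vacuous maximality for a singleton partition and using the character hypothesis to exclude the degenerate branch of the Occurrence clause. There is nothing to add.
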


\begin{proof}
    Follows directly from the definition.
\end{proof}

The idea is then to represent an occurrence partition of the pattern, and, for every string in the partition, maintain its suffix range. When asked to return the number of occurrences of the pattern, we return zero if the size of the partition is greater than one, or the length of the suffix range otherwise, as per \autoref{lem:partition_oc} (taking care of characters that do not occur in the text).

We can represent the occurrence partition in a balanced binary search tree, since we want to apply modifications, and this will allow us to do that in $\OO(\log |P|)$ time.

We are left with the task of maintaining the occurrence partition properties when modifying some character. Turns out that this is easy: given some index to edit, we can find the string from the partition that should be edited (by traversing the binary search tree). After that, we can apply \autoref{th:edit}, but not merge the suffix ranges when they would result in an empty range (occurrence property). This might break the maximality property, so we might need to apply some concatenations. It turns out that at most 4 concatenations are needed to ensure the maximality property, as per \autoref{lem:fix_maximality}.

\begin{lemma}
    Let $P$ be a pattern and $(P_1, P_2, \dots, P_k)$ some occurrence partition of $P$ with respect to the text $T$. After a character edition in $P$, we can find an updated occurrence partition using at most 4 concatenations.
    \label{lem:fix_maximality}
\end{lemma}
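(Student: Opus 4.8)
The plan is to localize the damage caused by a single character edition. Before the edition, the partition $(P_1,\dots,P_k)$ satisfies both the occurrence and maximality properties. A character edition (addition or deletion) at some position of $P$ lies inside exactly one block $P_j$ (or, for an addition, on the boundary between $P_j$ and $P_{j+1}$, in which case we may treat it as happening at the start of $P_{j+1}$). First I would apply \autoref{th:edit} to recompute the suffix range of this single affected block, obtaining a new block $P_j'$; by \autoref{lem:remove_from_pattern} and \autoref{th:concat} this takes $\OO(\log|T|)$ time and, importantly, uses a bounded number of concatenations (splitting $P_j$ into two pieces, then possibly gluing in one new character and re-merging — a constant amount, independent of $k$). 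If the resulting merge would give an empty range, we simply keep the two pieces unmerged, at the cost of temporarily violating maximality at that seam. So after this step the occurrence property holds everywhere, but maximality may fail at a constant number of seams, all located within a window of at most three consecutive blocks around position of the edit: the seams $(P_{j-1},\text{new block}_1)$, $(\text{new block}_1,\text{new block}_2)$, and $(\text{new block}_2,P_{j+1})$.

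Next I would argue that maximality can be restored by repeatedly merging across any seam that is not maximal, and that this process terminates after at most $4$ merges. The key observation is monotonicity: if $P_a \circ P_{a+1}$ occurs in $T$ then we replace the two blocks by their concatenation, which strictly decreases the number of blocks in the affected window; moreover, merging a block with its right neighbor can only create a new maximality violation with the (single) block now to its right, so violations do not proliferate. Thus the only seams that ever need checking are the at most three listed above, plus at most one new seam created each time we merge inward toward $P_{j-1}$ or $P_{j+1}$. A careful case count — on whether the edited-block pieces merge with each other, with $P_{j-1}$, with $P_{j+1}$, or cascade one step further — shows the total number of concatenations (each an application of \autoref{th:concat}) needed to reach a state with no maximality violation is at most $4$. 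Each check "does $P_a \circ P_{a+1}$ occur in $T$?" is itself answered by one suffix-range concatenation (the result is empty iff it does not occur), so checks and merges are the same operation.

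The main obstacle, and the part I would be most careful with, is the bookkeeping of the case analysis ensuring the constant $4$ is actually correct: one must verify that a merge never reawakens a violation at a seam that was already certified maximal, and that the cascade cannot propagate more than one block past $P_{j-1}$ or $P_{j+1}$. This follows because the blocks $P_1,\dots,P_{j-2}$ and $P_{j+2},\dots,P_k$ are untouched and were pairwise maximal before, and because concatenating does not lengthen a block's right neighbor's relevant prefix beyond what was already present — but making this precise requires spelling out each of the few configurations (pieces merge together or not; each side cascades zero or one step) and tallying the worst case. Everything else — the time bound $\OO(\log|T| + \log|P|)$, dominated by $\OO(1)$ suffix-range operations plus the balanced-BST navigation — is routine given the earlier results.
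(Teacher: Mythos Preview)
Your approach is the paper's approach, but your bookkeeping slips in two places that happen to cancel, so the argument as written is not sound even though the constant $4$ comes out right.

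First, in the addition case $P_j$ splits into \emph{three} pieces $Q_1,Q_2,Q_3$ (the left half of $P_j$, the inserted character, and the right half), not two. So the affected seams are the four seams $(P_{j-1},Q_1)$, $(Q_1,Q_2)$, $(Q_2,Q_3)$, $(Q_3,P_{j+1})$, not the three you list. The worst case is that all four collapse, i.e.\ $P_{j-1}\circ Q_1\circ Q_2\circ Q_3\circ P_{j+1}$ occurs in $T$; that is exactly where the bound $4$ comes from.

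Second, the cascade \emph{never} goes ``one step further'' past $P_{j-1}$ or $P_{j+1}$, and no case analysis is needed to see this. By maximality of the original partition, $P_{j-2}\circ P_{j-1}$ does not occur in $T$; since this is a prefix of $P_{j-2}\circ P_{j-1}\circ S$ for any $S$, the longer string cannot occur either. Symmetrically, $P_{j+1}\circ P_{j+2}$ does not occur, so neither does $S\circ P_{j+1}\circ P_{j+2}$. This is the one-line observation that replaces your ``careful case count'' and your worry about merges ``reawakening'' violations: once the merged block contains $P_{j-1}$ as a prefix (resp.\ $P_{j+1}$ as a suffix), its seam with $P_{j-2}$ (resp.\ $P_{j+2}$) is automatically maximal.
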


\begin{proof}
    Assume our edit was in the $i$-th string of the partition, so we had the partition $(\dots,\allowbreak P_{i-2},\allowbreak P_{i-1},\allowbreak P_{i},\allowbreak P_{i+1},\allowbreak P_{i+2},\allowbreak \dots)$. Following our strategy, $P_i$ might be split into 3 parts (in the case of an addition, the middle part represents the new character), say $Q_1, Q_2, Q_3$. It might be the case that $P_{i-1} \circ Q_1 \circ Q_2 \circ Q_3 \circ P_{i+1}$ occurs in the text (4 concatenations are necessary in this case). But this is the worst-case, since, from the maximality property of the initial partition, we know that $P_{i-2} \circ P_{i-1}$ does not occur in the text, so it can not be the case that $P_{i-2} \circ P_{i-1} \circ S$ occurs, for any string $S$. The same applies with $P_{i+1}$ and $P_{i+2}$.
\end{proof}

Let us go through an example. Assume that we have the text $T =$ {\tt cababaa} and the pattern $P = $ {\tt abcaabb}. An occurrence partition of $P$ with respect to $T$ is ({\tt ab}, {\tt c}, {\tt aa}, {\tt b}, {\tt b}). Now assume we want to insert a character {\tt b} in $P$ at index $4$, that is, between the two {\tt a}'s. This will turn $P$ into {\tt abcababb}. To do this, we first need to split the string {\tt aa}, so our representation becomes ({\tt ab}, {\tt c}, {\tt a}, {\tt a}, {\tt b}, {\tt b}); then we find the suffix range of the new character {\tt b}, and insert it to our representation: ({\tt ab}, {\tt c}, {\tt a}, {\tt b}, {\tt a}, {\tt b}, {\tt b}). Now we are left with concatenating adjacent strings in the representation (starting from the {\tt b} we inserted, in both directions), to fix the maximality property. After doing that, the final representation will become ({\tt ab}, {\tt cabab}, {\tt b}), so we performed 4 concatenations. We can see that, since $\texttt{ab} \circ \texttt{c}$ (concatenation of first and second strings in the initial representation) does not occur in $T$, then $\texttt{ab} \circ \texttt{c} \circ \texttt{abab}$ also does not occur in $T$.

\begin{theorem}
    Let $P$ be a pattern and let $T$ be the text. Let $P'$ be $P$ with the $i$-th character edited (added or deleted). If we have the occurrence partition of $P$, we can find an occurrence partition of $P'$ in $\OO(\log |T|)$ time.
    \label{th:edit_full}
\end{theorem}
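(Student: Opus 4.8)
The plan is to combine the machinery already assembled in the excerpt, so the proof is essentially a matter of accounting for the number of elementary operations. First I would describe the data structure explicitly: we keep the occurrence partition $(P_1,\dots,P_k)$ in a balanced binary search tree keyed by position, where each node stores a string $P_j$ of the partition together with its length $|P_j|$ and its suffix range $\SR(P_j,T)$ (and a flag marking whether $P_j$ is a single character not occurring in $T$). Each node also stores the sum of the lengths of the strings in its subtree, so that given an index $i$ we can locate, in $\OO(\log |P|)$ time, the string $P_m$ of the partition containing position $i$, and the split offset inside it.

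Next I would carry out the update. Given the index $i$ to edit, descend the tree to find $P_m$ and the offset. Using \autoref{lem:remove_from_pattern} (applied to prefixes and suffixes of $P_m$), compute $\SR$ of the two pieces $P_m[0,\cdot)$ and $P_m[\cdot,|P_m|)$ obtained by cutting $P_m$ at the edit point; for an addition, also compute $\SR(c,T)$ for the new character $c$ by binary search, as in \autoref{th:edit}. This yields a local list of at most three new strings $Q_1,Q_2,Q_3$ (fewer for a deletion, or when the cut falls at a boundary), each with its suffix range, which we splice into the tree in place of $P_m$ — an $\OO(\log |P|)$ BST operation. The partition is now a valid occurrence partition except possibly for the maximality property around the edited region. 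By \autoref{lem:fix_maximality}, maximality can be restored with at most $4$ concatenations: walk outward from the inserted piece, and whenever two adjacent strings $P_a\circ P_{a+1}$ occur in $T$ (checked by the concatenation test, i.e. running \autoref{th:concat}/\autoref{alg:concat} and seeing whether the result is nonempty), merge them into a single node carrying the concatenated suffix range; \autoref{lem:fix_maximality} guarantees this process touches only a constant-size window and terminates after at most $4$ merges. Each merge is one invocation of \autoref{th:concat}, costing $\OO(\log |T|)$, plus an $\OO(\log |P|)$ BST splice; one must also take care of the degenerate case where a piece is a single character that does not occur in $T$, in which case by the occurrence property it stays as its own singleton block and no merge across it is attempted. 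The result is an occurrence partition of $P'$.

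For the time bound, I would total the costs: locating the string in the tree is $\OO(\log |P|)$; the calls to \autoref{lem:remove_from_pattern} and to the single-character suffix-range computation are $\OO(\log |T|)$ each; the constantly many concatenations are $\OO(\log |T|)$ each by \autoref{th:concat}; and the constantly many BST insertions, deletions, and merges are $\OO(\log |P|)$ each. Since $|P| < |T|$ we have $\log |P| = \OO(\log |T|)$, so the entire update runs in $\OO(\log |T|)$ time, as claimed. Outputting the answer afterwards is $\OO(1)$: check whether the partition has size one (a value we can maintain in the BST root, or read off as "the root's subtree count of blocks"), and if so report the length of that block's suffix range, handling the non-occurring-character case by reporting $0$, as per \autoref{lem:partition_oc}.

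The main obstacle I expect is not any single deep step but the careful case analysis of the splicing and merging at the boundaries: ensuring that after the at-most-three-way split and the re-insertion, the "walk outward and merge" procedure is correctly bounded and correctly interfaces with (i) edits that land exactly on a block boundary (so $P_m$ is not actually split), (ii) deletions that empty out a block entirely, and (iii) singleton blocks representing characters absent from $T$, which must never be merged. \autoref{lem:fix_maximality} already packages the crucial combinatorial fact — that maximality of the old partition confines the repair to a constant window — so the remaining work is bookkeeping rather than a new idea.
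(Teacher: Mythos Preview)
Your proposal is correct and follows essentially the same approach as the paper: locate the affected block in the BST, split it via \autoref{lem:remove_from_pattern} (and compute $\SR(c,T)$ for an insertion), then invoke \autoref{lem:fix_maximality} to bound the number of repairs to a constant, each costing $\OO(\log |T|)$ by \autoref{th:concat}, with the BST bookkeeping absorbed into $\OO(\log |P|) \subseteq \OO(\log |T|)$. The paper's own proof is a two-sentence summary of exactly this argument; your write-up simply spells out the data-structure details and boundary cases more explicitly.
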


\begin{proof}
    From \autoref{lem:fix_maximality}, we only need to use \autoref{th:concat} $\OO(1)$ times. Since we can do all necessary operations in the binary search tree used to represent the occurrence partition in $\OO(\log |P|) \subseteq \OO(\log |T|)$ time, we can edit the pattern in $\OO(\log |T|)$ time.
\end{proof}

\section{Improving pattern search}

We now discuss the pattern search operation, that is, set the current pattern as some pattern $P$ given in the input. We assume the current pattern before the operation is empty. Of course, this can be viewed as repeated pattern edition, and therefore can be done in $\OO(|P| \log |T|)$ time. But we can also compute it in $\OO(|P| + |P|_T \log |T|)$ time.

\begin{lemma}
    Greedily taking each time the largest prefix of the pattern that occurs in the text produces an occurrence partition of size $|P|_T$.
    \label{lem:greedy}
\end{lemma}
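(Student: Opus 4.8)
The plan is to show that the greedy partition is both a valid occurrence partition and has minimum size, by an exchange argument against an arbitrary minimum partition. Let $(G_1, G_2, \dots, G_m)$ be the partition produced by the greedy process — at each step, starting from the current position in $P$, we take the longest prefix of the remaining suffix of $P$ that occurs in $T$ (or a single non-occurring character, if the next character does not occur in $T$ at all). Let $(P_1, \dots, P_k)$ be any minimum occurrence partition, so $k = |P|_T$. We want $m = k$.

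First I would check that the greedy output is indeed an occurrence partition in the sense of \autoref{def:partition}: the occurrence property holds by construction (each $G_i$ either occurs in $T$, or has length one and is a non-occurring character, which is forced exactly when the next character of $P$ does not occur in $T$). For maximality, suppose $G_i \circ G_{i+1}$ occurs in $T$; then at the step where we chose $G_i$, the prefix $G_i \circ G_{i+1}$ of the remaining suffix also occurred in $T$ and is strictly longer, contradicting the greedy choice — unless $G_i$ was a forced single non-occurring character, but then no extension occurs in $T$ either, so maximality holds regardless.

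The core is minimality, via the standard "greedy stays ahead" argument. Let $g_i$ and $p_i$ denote the ending positions in $P$ of the $i$-th block of the greedy and the minimum partition respectively (with $g_0 = p_0 = 0$). I would prove by induction on $i$ that $g_i \geq p_i$: assuming $g_{i-1} \geq p_{i-1}$, the block $P_i$ of the minimum partition covers $P[p_{i-1}, p_i)$, which occurs in $T$ (or is a single non-occurring character). The greedy's $i$-th block starts at $g_{i-1} \geq p_{i-1}$, so the substring $P[g_{i-1}, p_i)$ is a substring of $P[p_{i-1}, p_i)$, hence also occurs in $T$ (substrings of occurring strings occur; and the single-non-occurring-character case only arises when $P[g_{i-1}]$ itself does not occur, forcing $p_i = g_{i-1}+1 = g_i$). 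Therefore the longest occurring prefix starting at $g_{i-1}$ reaches at least position $p_i$, i.e. $g_i \geq p_i$. Since $g_k \geq p_k = |P|$, the greedy finishes within $k$ blocks, so $m \leq k = |P|_T$; and since the greedy output is a valid occurrence partition, $m \geq |P|_T$. Hence $m = |P|_T$.

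The main obstacle — really the only subtlety — is handling the characters that do not occur in $T$ cleanly, since a "block" can be a single such character rather than a genuine occurring substring, and the definition of occurrence partition treats these as a special case. I would be careful in the induction step to argue that whenever the greedy is forced to emit such a length-one block at position $g_{i-1}$, the minimum partition is equally forced (its block containing position $g_{i-1}$ must then also be exactly that single character, as no longer substring through that position occurs in $T$), which keeps the inequality $g_i \geq p_i$ intact. The rest is routine.
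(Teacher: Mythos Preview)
Your proof is correct. It follows the standard \emph{greedy stays ahead} template: you inductively maintain $g_i \geq p_i$ for the block endpoints, then conclude $m \leq k$ once $g_k \geq p_k = |P|$; combined with the easy direction $m \geq |P|_T$ (validity of the greedy output as an occurrence partition), this gives $m = |P|_T$. The treatment of single non-occurring characters is handled correctly: if $g_{i-1} < p_i$ then position $g_{i-1}$ lies inside the block $P_i$, and if $P[g_{i-1}]$ does not occur in $T$ then $P_i$ cannot occur either, forcing $|P_i| = 1$ and hence $g_{i-1} = p_{i-1}$, $g_i = p_i$.

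The paper takes the complementary standard route: an \emph{exchange argument}, asserting that any minimum occurrence partition can be transformed into the greedy one by repeatedly shifting the division points to the right, from left to right. This is really the same inequality $g_i \geq p_i$ in disguise (each rightward shift moves some $p_i$ toward the corresponding $g_i$), but the paper states it in one sentence and does not spell out the non-occurring-character case or the maximality check. Your version is more explicit and self-contained; the paper's buys brevity at the cost of leaving those details to the reader.
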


\begin{proof}
    Any minimum occurrence partition can be transformed to the partition that the greedy algorithm produces, by repeatedly shifting the positions of the divisions between the strings to the right (from left to right).
\end{proof}

\begin{lemma}
    The algorithm from \cite{myers} to compute $\SR(P, T)$ can be implemented in $\OO(\log |T| + \ell)$ time, if $\ell$ is the size of the largest prefix of $P$ that occurs in $T$.
    \label{lem:sr_faster}
\end{lemma}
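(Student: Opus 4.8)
The plan is to revisit the Manber--Myers accelerated search and locate precisely where the dependence on $|P|$ enters its running time, then show that in that analysis $|P|$ may be replaced by $\ell$.

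First I would recall the structure of the algorithm. It performs two binary searches over $\SA(T)$ (one for each endpoint of $\SR(P,T)$), and throughout a search confined to a current interval $[L,R]$ it maintains the quantities $l = lcp(P, T_{\SA[L]})$ and $r = lcp(P, T_{\SA[R]})$. At each step it takes the midpoint $M$ and, using an $\OO(1)$-time $lcp$ query between two \emph{text} suffixes (available by \autoref{lem:lcp_query} together with constant-time range-minimum queries), compares $lcp(T_{\SA[M]}, T_{\SA[L]})$ (or the analogous quantity against $R$, according to which of $l,r$ is larger) with $\max(l,r)$. If these differ, the lexicographic comparison between $P$ and $T_{\SA[M]}$ is already determined and not a single character of $P$ is examined; if they are equal (the ``tie'' case), the algorithm resumes comparing $P$ with $T_{\SA[M]}$ starting at position $\max(l,r)$ and proceeds until the first mismatch (or the end of $P$).

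The two facts I would then establish are: (i) in every iteration $\max(l,r)$ never decreases, and it increases by exactly the number of \emph{matching} character comparisons performed in that iteration, while at most one \emph{mismatching} comparison is performed per iteration; hence the total number of character comparisons over a whole binary search is at most $\max(l,r)_{\mathrm{final}} + \OO(\log|T|)$, the additive term accounting for the $\OO(\log|T|)$ iterations. And (ii) for every text suffix $T_{\SA[i]}$ we have $lcp(P, T_{\SA[i]}) \le \ell$: if $\ell = |P|$ this is immediate, and if $\ell < |P|$ then $lcp(P, T_{\SA[i]}) \ge \ell+1$ would make $P[0,\ell+1)$ a prefix of a suffix of $T$, hence a substring of $T$, contradicting the maximality of $\ell$. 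In particular $\max(l,r) \le \ell$ at all times, so the total character-comparison cost of each of the two searches is $\OO(\ell + \log|T|)$; the same argument bounds the initial computation of $l$ and $r$ at the endpoints $\SA[0]$ and $\SA[|T|-1]$, and all remaining work ($lcp$ range-minimum queries, midpoint bookkeeping, and, when a single character is appended, computing $\SR(c,T)$) costs $\OO(\log|T|)$ in total. Summing the two searches gives $\OO(\ell + \log|T|)$.

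The step I expect to need the most care is verifying (i): that in a tie iteration the character comparisons always start at index exactly $\max(l,r)$ and that $\max(l,r)$ is monotone across iterations. This amounts to checking the three cases for $lcp(T_{\SA[M]},T_{\SA[L]})$ (greater than, less than, or equal to $\max(l,r)$), and symmetrically for $R$, confirming in each non-tie case that the comparison is decided without touching $P$ and that after replacing an endpoint by $M$ the new value of $\max(l,r)$ is exactly the index reached. This is the classical correctness argument for the Manber--Myers ``super-accelerant''; I would present it compactly, stressing only the one new consequence that matters here — no character of $P$ at position $\ge \ell$ is ever examined as a match — which is precisely what turns the standard $\OO(|P| + \log|T|)$ bound into $\OO(\ell + \log|T|)$.
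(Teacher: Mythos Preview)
Your proposal is correct and follows the same approach as the paper: you identify that the only non-constant work in the Manber--Myers binary search is the naive $lcp$ extension between $P$ and the current middle suffix, and that the total such work is bounded by the length of the longest prefix of $P$ occurring in $T$. The paper's own proof is a one-sentence version of exactly this argument (``every time it runs another iteration it is increasing the size of a prefix of $P$ that occurs in $T$''), whereas you spell out the monotonicity of $\max(l,r)$ and the bound $\max(l,r)\le \ell$ explicitly; the aside about computing $\SR(c,T)$ is extraneous to this lemma but harmless.
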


\begin{proof}
    The only operation inside the binary search that has nontrivial cost is the $lcp$ computation between the pattern and some suffix of the text. But note that the total time that takes is bounded by $\ell$, since every time it runs another iteration it is increasing the size of a prefix of $P$ that occurs in $T$.
\end{proof}

From \autoref{lem:sr_faster}, we can repeatedly run the algorithm from \cite{myers}, and each time from its output we can figure out the size of the string to use for our partition. This is running the greedy algorithm, and from \autoref{lem:greedy} we know that this produces a partition of size $|P|_T$, and adding the total cost we get $\OO(|P|)$ plus $\OO(\log |T|)$ times the size of the partition. Therefore, we can find the occurrence partition in $\OO(|P| + |P|_T \log |T|)$ time.

It turns out we can, given $\SR(P, T)$, compute $\SR(P \circ \text{\tt c}, T)$ for some character $\text{\tt c} \in \Sigma$ in $\OO(\log |\Sigma|)$ time after $\OO(|T|)$ preprocessing, by essentially using the suffix tree (storing, for each of the $\OO(|T|)$ possible ranges, its child ranges, and doing binary search over those -- there can be at most $|\Sigma|$ children).

\section{Substring editions and running time bounds}

Substring modifications can be achieved by operations in the binary search tree that represents the occurrence partition. Deletion and transposition of a substring of the pattern can be done by split and join operations in the binary search tree. By similar arguments as before, we see that only a constant number of merges of suffix ranges are needed after each operation.

Substring copying can be achieved using persistent binary search trees in a similar way, since the persistence of nodes and subtrees allows us to ``copy'' a subtree \cite{driscoll,sarnak}.

The time bounds of our solution for \autoref{prob:dyn_pattern_static_text} are described in \autoref{tab:complexities_static}, along with the time bounds achieved by Amir and Kondratovsky~\cite{amir2018} using suffix trees. Our solution for the static text case is considerably simpler, and we improve on the preprocessing time and space and on the substring editions.

\begin{table}[H]
    \centering
    \caption{Time bounds for \autoref{prob:dyn_pattern_static_text}. $\ell$ is the size of the modified substring.}
    \begin{tabular}{|c|c|c|}
        \hline
        \cellcolor{gray!20} Operation & \cellcolor{gray!20} \cite{amir2018} & \cellcolor{gray!20} This work \\
        \hline\hline
        Preprocess Time and Space & $\OO(|T| \sqrt{\log |T|})$ & $\OO(|T|)$ \\
        \hline
        Pattern Search & $\OO(|P| \log |T|)$ & \makecell{$\OO(|P| + |P|_T \log |T|)$ \\ or \\ $\OO(|P| \log |\Sigma|)$} \\
        % Pattern Search & $\OO(|P| \log |T|)$ & \tiny $\OO\left(|P| + \min\left( |P|_T, \frac{|P|}{\log_{|\Sigma|} |P|} \right) \log |T|\right)$ \\
        \hline
        Pattern Symbol Edition & $\OO(\log |T|)$ & $\OO(\log |T|)$ \\
        \hline
        Pattern Substring Edition & $\OO(\log |T| + \ell)$ & $\OO(\log |T|)$ \\
        \hline
    \end{tabular}
    \label{tab:complexities_static}
\end{table}

\section{Online text}

We can extend our solutions for \autoref{prob:dyn_pattern_static_text} to maintain an \textit{online} text, that is, also support text extensions (on one end). We will assume we want to extend the text on the left.

\begin{problem}[Dynamic Pattern and Online Text Matching]
    \label{prob:dyn_pattern_online_text}
    
    \parindent0pt

    \textbf{Input}: A string $T$ that represents the text, several updates to the pattern string $P$ that can be one of the following.

    \begin{enumerate}
        \item Pattern search: computing the representation of a pattern;
        \item Pattern symbol edition: addition or deletion of some character of the pattern;
        \item Pattern substring edition: substring deletion, transposition (moving the substring to another position), or copy.
    \end{enumerate}

    Also, we support adding a new symbol to the left of the text.

    \textbf{Output}: After every update to the pattern, the number of times it occurs in the text.
\end{problem}

To solve \autoref{prob:dyn_pattern_online_text}, we use the online suffix array described in \cite{monteiro}, which maintains the suffix array structure in a balanced binary search tree, allowing for text extensions in amortized logarithmic time. By storing pointers to the relevant tree nodes, we can maintain the suffix ranges. The problem is that, after a symbol extension in the text, the occurrence partition might lose the maximality property.

But note that this is not a problem: we just need to try to concatenate the first and second elements of the occurrence partition after every operation.

This obviously maintains correctness, because we just need to know if the pattern will occur or not, and this will be the case if, and only if, we are able to join all elements of the occurrence partition into one (\autoref{lem:partition_oc}).

Also note that this does not add to the time complexities of the operations, in an amortized sense. The idea is that the number of joins we make is bounded by the number of splits, and we do a constant number of splits per operation. This is formalized in \autoref{th:online}.

\begin{theorem}
    For a text $T$, \autoref{prob:dyn_pattern_online_text} can be solved with $\OO(|T|)$ time construction and $\OO(\log |T|)$ amortized time per operation.
    \label{th:online}
\end{theorem}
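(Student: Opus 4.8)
The plan is to combine three ingredients: the amortized-logarithmic online suffix array of~\cite{monteiro}, the occurrence-partition machinery already developed for the static-text case, and an amortization argument that charges the extra concatenations triggered by text extensions against the splits performed by earlier operations. First I would set up the data structure: maintain the online suffix array in a balanced BST as in~\cite{monteiro}, and maintain the occurrence partition of the current pattern in a (persistent, for the copy operation) balanced BST, where each node stores the suffix range of its string, represented by pointers to the relevant nodes of the suffix-array tree rather than by integer indices (so that the ranges remain valid as the suffix array is updated). The $\OO(|T|)$ construction bound is immediate: build the initial suffix array and LCP array in linear time, and start with an empty pattern.

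Next I would handle the two sources of work per operation. For a pattern operation (search, symbol edition, or substring edition), the analysis is essentially that of \autoref{th:edit_full} and the substring-edition section, except that every suffix-range primitive — the $\ISA$ lookup, the $lcp$/RMQ query, the binary search over a range, the $\SR(c,T)$ query — must now run on the BST-based online suffix array, where each primitive costs $\OO(\log|T|)$ amortized instead of $\OO(1)$ or $\OO(\log|T|)$ worst case; since \autoref{th:concat} and \autoref{lem:fix_maximality} only ever invoke a constant number of these primitives, each pattern operation still costs $\OO(\log|T|)$ amortized, and it performs only $\OO(1)$ splits of the partition BST. For a text-extension operation, I would first invoke the~\cite{monteiro} update (amortized $\OO(\log|T|)$), and then observe that the only partition invariant that can break is maximality, and only at the very front: inserting a symbol on the left of $T$ can only create new occurrences of strings that were previously just-barely-not-occurring, and a string $P_j$ that did not occur in the old text still does not occur in the new text unless it is a prefix of the pattern touching position~$1$ — more carefully, the standard observation is that it suffices to repeatedly try to concatenate the first two elements $P_1 \circ P_2$ of the partition, as stated in the paragraph preceding the theorem, because whether the whole pattern occurs is decided by \autoref{lem:partition_oc}.

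Then comes the amortization, which I expect to be the main obstacle — or at least the step requiring the most care. I would use a potential function $\Phi$ equal to (a constant times $\log|T|$ times) the current number of elements in the occurrence partition, $|P|_T'$ (the size of the partition actually stored, not necessarily minimum). Every pattern operation increases the partition size by $\OO(1)$, so it increases $\Phi$ by $\OO(\log|T|)$, which is absorbed into its amortized cost; a text extension does not increase the partition size by itself. Each successful front-concatenation performed by any operation decreases the partition size by one, releasing $\Theta(\log|T|)$ potential, which pays for the $\OO(\log|T|)$ cost of that concatenation (the \autoref{th:concat} call plus the BST surgery); each failed front-concatenation attempt costs $\OO(\log|T|)$ but there is at most one such failed attempt per text extension (once $P_1\circ P_2$ fails to occur we stop), so it is charged directly to the operation. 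Hence the total number of concatenations over any sequence of operations is bounded by $\OO(1)$ per operation plus the telescoping of the partition size, giving the claimed amortized $\OO(\log|T|)$ per operation. The subtle points to get right are: that the~\cite{monteiro} amortization and this partition-size amortization compose (they do, since one can just add the two potential functions), that pointer-based suffix ranges survive BST rotations in the online suffix array (this is exactly what~\cite{monteiro} provides), and that the persistent-BST copy operation still only creates $\OO(1)$ new ``boundary'' elements needing concatenation, so its amortized cost is likewise $\OO(\log|T|)$. I would close by collecting these bounds into the statement of \autoref{th:online}.
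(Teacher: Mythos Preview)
Your proposal is correct and follows essentially the same approach as the paper: the identical potential function $\Phi = (\text{partition size})\cdot\log|T|$, with each operation contributing $\OO(1)$ new partition elements and each concatenation releasing $\Theta(\log|T|)$ of potential to pay for itself. Your write-up is in fact more detailed than the paper's own proof, which gives only the bare potential-function calculation and does not spell out the failed-concatenation charge, the composition with the \cite{monteiro} amortization, or the pointer-based representation of suffix ranges.
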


\begin{proof}
    We will use a potential function to define our amortized complexities \cite{cormen}. Let $D_i$ be the state of our data structure after operation $i$, and let $E(D_i)$ be the number of elements of the occurrence partition. Define our potential function $\Phi(D_i) = E(D_i) \cdot \log|T|$. %We also denote $\Delta \Phi(D_i) = \Phi(D_i) - \Phi(D_{i-1})$.
    Every operation increases the number of elements by a constant, so this adds an amortized cost of $\OO(\log |T|)$ to each operation. If we assume that we then perform $k$ concatenations, we have $\Phi(D_i) - \Phi(D_{i-1}) = c_1 \log |T| - k \cdot \log |T|$, for some constant $c_1$.
    
    Finally, summing the real cost with the change of our amortized function, we get the amortized cost, for some constant $c_2$:

    \begin{align*}
    & c_2 \log |T| + k \log |T| + \left( \Phi(D_i) - \Phi(D_{i-1}) \right) \\
        =&\, c_2 \log |T| + k \log |T| + c_1 \log |T| - k \cdot \log |T| \\
        =&\, c_2 \log |T| + c_1 \log |T| \\
        \in&\, \OO(\log |T|) \\
    \end{align*}
    
\end{proof}

The time complexities for \autoref{prob:dyn_pattern_online_text} are summarized in \autoref{tab:complexities_online}.

\begin{table}[ht]
    \centering
    \caption{Time bounds for \autoref{prob:dyn_pattern_online_text}. The $\star$ symbol represents amortized bounds, and $\ell$ is the size of the modified substring.}
    \begin{tabular}{|c|c|c|}
        \hline
        \cellcolor{gray!20} Operation & \cellcolor{gray!20} \cite{amir2018} & \cellcolor{gray!20} This work \\
        \hline\hline
        Preprocess Time and Space & $\OO(|T| \sqrt{\log |T|})$ & $\OO(|T|)$ \\
        \hline
        Pattern Symbol Edition & $\OO(\log |T|)$ & $\OO(\log |T|)^\star$ \\
        \hline
        Pattern Substring Edition & $\OO(\log |T| + \ell)$ & $\OO(\log |T|)^\star$ \\
        \hline
        Text Symbol Extension & $\OO(\log |T|)$ & $\OO(\log |T|)^\star$ \\
        \hline
    \end{tabular}
    \label{tab:complexities_online}
\end{table}

\section{Conclusion}

In this work, we tackled the \textit{modified pattern reporting problem} defined by Amir and Kondratovsky~\cite{amir2018}. We improved upon their algorithm, with a simpler and faster algorithm -- our preprocess time is linear, and we can support substring editions in sub-linear time (see \autoref{tab:complexities_static}).

In addition, we apply our algorithm for an online text, achieving amortized logarithmic bounds (see \autoref{tab:complexities_online}). The paper by Amir and Kondratovsky~\cite{amir2018} claim to achieve worst-case logarithmic bounds, though their analysis omits key details that prevent independent verification.

To illustrate the feasibility of the proposed solution, it was implemented in C\texttt{++} (available in a public repository\footnote{\url{https://github.com/brunomaletta/DynamicPatternMatching}}) and tested with random test data against a naive solution. Our algorithm performed orders of magnitude faster than the naive algorithm (in the worst case) even for strings with size in the order of one million, suggesting its practicability.

\newpage

%
% ---- Bibliography ----
%
% BibTeX users should specify bibliography style 'splncs04'.
% References will then be sorted and formatted in the correct style.
%
\bibliographystyle{splncs04}
\bibliography{references}

\newpage
\section*{Appendix: computation of the suffix range}
Here we show how to compute $\SR(P, T)$. It turns out that this can be computed in $\OO(|P| + \log |T|)$ \cite{myers}. For this, we compute the first and last indices of the range independently.

Let us first see how to compute the first index of the range in $\OO(|P| \log |T|)$. This can be easily done with a binary search, since what we want is the first suffix that is not smaller than $P$, and the suffixes are sorted. Note that the $|P|$ factor in the complexity comes from string comparison. This is implemented in \autoref{alg:sr_slow}. We are using the type of binary search that maintains that the answer is in the open range $(L, R)$.

\begin{algorithm}[ht]
    \caption{Slow Algorithm for Suffix Range.}
    \hspace*{\algorithmicindent} \textbf{Input}: pattern $P$, suffix array of the text $\SA(T)$.
    \\
    \hspace*{\algorithmicindent} \textbf{Output}: first index of $\SR(P, T)$, or $|T|$ if $P$ does not occur in $T$.
    \\
    \hspace*{\algorithmicindent} \textbf{Time Complexity}: $\OO(|P| \log |T|)$.
    \bigskip
    
    \begin{algorithmic}
        \If{$lcp(P, T_{\SA[0]}) = |P|$}
            \Return 0
        \EndIf
        \\
    
        \State $(L, R) \gets (0, |T|-1)$
        \\

        \While{$R - L > 1$}
            \State $M \gets \left\lfloor \frac{L+R}{2} \right\rfloor$
            \If{$P \preceq T_{\SA[M]}$}
                \State $R \gets M$
            \Else
                \State $L \gets M$
            \EndIf
        \EndWhile
        \\
        
        \If{$lcp(P, T_{\SA[R]}) < |P|$}
            \Return $|T|$ \Comment{$P$ does not occur in $T$}
        \EndIf
        \\
        
        \Return $R$
    \end{algorithmic}
    \label{alg:sr_slow}
\end{algorithm}

For the faster $\OO(|P| + \log |T|)$ implementation, we need to use the $lcp$ information. At every iteration of the binary search, we maintain two values $\ell$ and $r$, the $lcp$ between $P$ and the suffixes $T_{\SA[L]}$ and $T_{\SA[R]}$, respectively.

Now, we need to compute $lcp(P, T_{\SA[M]})$, and then the next character will tell us what side of the binary search we need to go to. For that, let us assume that $\ell \geq r$ (the other case will be symmetric). We now compare $\ell$ with $lcp(T_{\SA[L]}, T_{\SA[M]})$ (which we can compute in constant time from \autoref{lem:lcp_query}). We have two cases:

\begin{enumerate}
    \item $lcp(T_{\SA[L]}, T_{\SA[M]}) < \ell$: in this case, $lcp(P, T_{\SA[M]}) = lcp(T_{\SA[L]}, T_{\SA[M]})$, because $P$ matches $\ell$ characters from $T_{\SA[L]}$, and $T_{\SA[L]}$ matches less than $\ell$ characters from $T_{\SA[M]}$.
    \label{case:sa_range_1}

    \item $lcp(T_{\SA[L]}, T_{\SA[M]}) \geq \ell$: now we can find the smallest $i > \ell$ such that $P[i] \neq T_{\SA[M]}[i]$, and assign $lcp(P, T_{\SA[M]}) = i$.
    \label{case:sa_range_2}
\end{enumerate}

It turns out that if we do \autoref{case:sa_range_2} naively and update our $(\ell, r)$ accordingly, the total cost of these comparisons throughout the binary search is $\OO(|P|)$, because each time we increase $\max(\ell, r)$, and these values can only increase up to $|P|$. Also, on case \autoref{case:sa_range_1} we always go left on the binary search (because $r \leq lcp(T_{\SA[L]}, T_{\SA[M]}) < \ell$), thus maintaining or increasing the value of $r$. In fact, both $\ell$ and $r$ never decrease.

Therefore the total time cost of the binary search is $\OO(|P| + \log |T|)$ (see \autoref{alg:sr_fast}).

\begin{algorithm}[ht]
    \caption{Fast Algorithm for Suffix Range.}
    \hspace*{\algorithmicindent} \textbf{Input}: pattern $P$, suffix array of the text $\SA(T)$.
    \\
    \hspace*{\algorithmicindent} \textbf{Output}: first index of $\SR(P, T)$, or $|T|$ if $P$ does not occur in $T$.
    \\
    \hspace*{\algorithmicindent} \textbf{Time Complexity}: $\OO(|P| + \log |T|)$.
    \bigskip
    
    \begin{algorithmic}
        \State $\ell \gets lcp(P, T_{\SA[0]})$
        \State $r \gets lcp(P, T_{\SA[|T|-1]})$
        \If{$\ell = |P|$}
            \Return 0
        \EndIf
        \\
        
        \State $(L, R) \gets (0, |T|-1)$
        \\
        
        \While{$R - L > 1$}
            \State $M \gets \left\lfloor \frac{L+R}{2} \right\rfloor$
            \If{$\ell \geq r$}
                \If{$lcp(T_{\SA[L]}, T_{\SA[M]}) < \ell$}
                    \State $m \gets lcp(T_{\SA[L]}, T_{\SA[M]})$ \Comment Computed in constant time
                \Else
                    \State $m \gets \ell + lcp(P_\ell, T_{\SA[M]+\ell})$ \Comment Computed naively
                \EndIf
            \Else
                \If{$lcp(T_{\SA[R]}, T_{\SA[M]}) < r$}
                    \State $m \gets lcp(T_{\SA[R]}, T_{\SA[M]})$ \Comment Computed in constant time
                \Else
                    \State $m \gets r + lcp(P_r, T_{\SA[M]+r})$ \Comment Computed naively
                \EndIf
            \EndIf
            \\

            \If{$m = |T|$ or ($\SA[M]+m < |T|$ and $P[m] < T_{\SA[M]}[m]$)}
                \State $(R, r) \gets (M, m)$
            \Else
                \State $(L, \ell) \gets (M, m)$
            \EndIf
        \EndWhile
        \\

        \State $plcp \gets \max(\ell, r)$ \Comment Largest prefix of $P$ that occurs in $T$
        \If{$plcp < |P|$}
            \Return $|T|$
        \EndIf
        \\
        
        \Return $R$
    \end{algorithmic}
    \label{alg:sr_fast}
\end{algorithm}

\end{document}